\documentclass[11pt,english]{article}
\usepackage[T1]{fontenc}
\usepackage[latin9]{inputenc}
\usepackage{geometry}
\geometry{verbose,tmargin=2.5cm,bmargin=2.5cm,lmargin=2.5cm,rmargin=2.5cm}
\usepackage{amsmath}
\usepackage{amsthm}
\usepackage{amssymb}
\usepackage{esint}

\makeatletter

\newcommand{\noun}[1]{\textsc{#1}}

\theoremstyle{plain}
\newtheorem{thm}{\protect\theoremname}[section]
  \theoremstyle{definition}
  \newtheorem{defn}[thm]{\protect\definitionname}
  \theoremstyle{plain}
  \newtheorem{cor}[thm]{\protect\corollaryname}
  \theoremstyle{plain}
  \newtheorem{conjecture}[thm]{\protect\conjecturename}
  \theoremstyle{remark}
  \newtheorem{rem}[thm]{\protect\remarkname}

\makeatother

\usepackage{babel}
  \providecommand{\conjecturename}{Conjecture}
  \providecommand{\corollaryname}{Corollary}
  \providecommand{\definitionname}{Definition}
  \providecommand{\remarkname}{Remark}
\providecommand{\theoremname}{Theorem}

\begin{document}

\title{Spectral and Modular Analysis of \#P Problems}

\author{Ohad Asor}
\maketitle
\begin{abstract}
We present various analytic and number theoretic results concerning
the \noun{\#sat} problem as reflected when reduced into a \noun{\#part}
problem. As an application we propose a heuristic to probabilistically
estimate the solution of \noun{\#sat} problems.
\end{abstract}

\section{Overview}

\noun{\#sat} is the problem of counting the number of satisfying assignments
to a given 3CNF formula, while \noun{\#part} is the problem of counting
the number of zero partitions in a given set of integers. Precise
definitions will be given later on. We present various results concerning
\noun{\#part} and analyze their connection with \noun{\#sat}. On section
2 we skim some preliminaries. Section 3 presents the core of the analytic
setting by analyzig the \noun{\#part} problem as manipulations over
product of cosines. Section 4 derives a modular-arithmetic formula
for computing \noun{\#part}, and section 5 presents implications to
complexity theory. Section 6 deals with asymptotic normality, and
section 7 deals with variances and correlations. Section 8 propose
how multiple reductions may give probabilistic answer to \noun{\#sat}
as a consequence our analysis. Section 9 summarizes the highlights
of the paper.

\section{Preliminaries}

Our setting is counting the number of solutions given an instance
of the Partition problem. We sometimes use custom terminology as there
is no unified one.
\begin{defn}
Given $n\in\mathbb{N},\mathbf{x}\in\mathbb{N}^{n}$, a \emph{Partition}
$\sigma$ of $\mathbf{x}$ is some $\sigma\in\left\{ -1,1\right\} ^{n}$.
The \emph{size} of the partition $\sigma$ is $\left\langle \mathbf{x},\sigma\right\rangle =\sum_{k=1}^{n}\sigma_{k}x_{k}$.
A partition is called a \emph{zero partition} if its size is zero.
The problem \noun{\#part} is to determine the number of zero partitions
given $\mathbf{x}$. The problem \noun{part} is deciding whether a
zero partition exists or not for $\mathbf{x}$. The \emph{Weak} setting
of the problem is when $\mathbf{x}$ is supplied in unary radix, and
the \emph{Strong} setting is when it is supplied in binary radix (or
another format with same efficiency), therefore the input size is
logarithmically smaller on the strong setting.
\end{defn}
\noun{\#part} is in \#P complexity class. The setting of \noun{\#part}
after being reduced from the counting Boolean Satisfiability problem
(\noun{\#sat}) is $n$ integers to partition each having up to ${\cal O}\left(n\right)$
binary digits (where $n$ is linear in the size of the CNF formula),
demonstrating why the rather strong setting is of interest. In fact,
there exist polynomial time algorithms solving the weak setting of
\noun{part}, notably Dynamic Programming algorithms, as well as the
formula derived on Theorem 4.1 below. However, solving \noun{part}
under the strong setting is not possible in polynomial time (as a
function of the input's length), unless P=NP.

\noun{\#sat} can be reduced to \noun{\#subset-sum} using an algorithm
described in \cite{key-1}, while various slight variations appear
on the literature. We summarize the reductions on the Appendix.

\section{Analytic Setting}
\begin{thm}
Given $n\in\mathbb{N},\mathbf{x}\in\mathbb{N}^{n}$ then the (probability-theoretic)
characteristic function of the random variable $\left\langle \mathbf{x},\sigma\right\rangle =\sum_{k=1}^{n}x_{k}\sigma_{k}$
over uniform $\sigma\in\left\{ -1,1\right\} ^{n}$ is $\prod_{k=1}^{n}\cos\left(x_{k}t\right)$.\end{thm}
\begin{proof}
Consider the formula $2\cos a\cos b=\cos\left(a+b\right)+\cos\left(a-b\right)$
and the cosine being even function to see that: 
\begin{equation}
\psi\left(t\right)\equiv\psi\left(x_{1},\dots,x_{n},t\right)\equiv\prod_{k=1}^{n}\cos\left(x_{k}t\right)=2^{-n}\sum_{\sigma\in\left\{ -1,1\right\} ^{n}}\cos\left(t\left\langle \mathbf{x},\sigma\right\rangle \right)=\mathbb{E}\left[e^{it\left\langle \mathbf{x},\sigma\right\rangle }\right]
\end{equation}
\end{proof}
\begin{cor}
Given $n\in\mathbb{N},\mathbf{x}\in\mathbb{N}^{n}$ then
\begin{equation}
2^{n}\intop_{0}^{1}\prod_{k=1}^{n}\cos\left(2\pi x_{k}t\right)dt
\end{equation}

is the number of zero partitions of $\mathbf{x}$.\end{cor}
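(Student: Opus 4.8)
The plan is to substitute $t\mapsto 2\pi t$ in the identity furnished by Theorem 3.1 and then integrate term by term over $[0,1]$, exploiting the orthogonality of the functions $t\mapsto\cos\left(2\pi m t\right)$ on the unit interval. Concretely, Theorem 3.1 (after the substitution) reads
\[
\prod_{k=1}^{n}\cos\left(2\pi x_{k}t\right)=2^{-n}\sum_{\sigma\in\left\{ -1,1\right\} ^{n}}\cos\left(2\pi t\left\langle \mathbf{x},\sigma\right\rangle \right),
\]
a \emph{finite} sum, so I may integrate over $t\in\left[0,1\right]$ and interchange the summation with the integral with no convergence concerns.

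The key arithmetic observation is that since $\mathbf{x}\in\mathbb{N}^{n}$, for every $\sigma\in\left\{ -1,1\right\} ^{n}$ the size $\left\langle \mathbf{x},\sigma\right\rangle =\sum_{k}\sigma_{k}x_{k}$ is an integer. Hence $\int_{0}^{1}\cos\left(2\pi t\left\langle \mathbf{x},\sigma\right\rangle \right)dt$ equals $1$ when $\left\langle \mathbf{x},\sigma\right\rangle =0$ and equals $0$ otherwise: in the nonzero case the antiderivative $\sin\left(2\pi t\left\langle \mathbf{x},\sigma\right\rangle \right)/\left(2\pi\left\langle \mathbf{x},\sigma\right\rangle \right)$ vanishes at both endpoints because $\sin$ is $2\pi$-periodic and the argument at $t=1$ is an integer multiple of $2\pi$. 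Therefore $\int_{0}^{1}\prod_{k=1}^{n}\cos\left(2\pi x_{k}t\right)dt=2^{-n}\,\#\left\{ \sigma:\left\langle \mathbf{x},\sigma\right\rangle =0\right\} $, and multiplying by $2^{n}$ yields exactly the number of zero partitions of $\mathbf{x}$.

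There is no genuine obstacle here; the only point requiring care is the integrality of $\left\langle \mathbf{x},\sigma\right\rangle $, which is precisely what makes the orthogonality relation collapse to the indicator of the zero-partition condition — if $\mathbf{x}$ had non-integer entries the integral would instead produce a sum of sinc-type terms carrying no direct combinatorial meaning. It is worth remarking that the statement is nothing but the Fourier-analytic form of ``extracting the constant term'': writing each cosine as $\tfrac12\left(e^{2\pi ix_{k}t}+e^{-2\pi ix_{k}t}\right)$, the quantity $2^{n}\int_{0}^{1}\psi\left(2\pi t\right)dt$ is the coefficient of $e^{0}$ in the resulting expansion, which by definition enumerates the sign vectors $\sigma$ with $\left\langle \mathbf{x},\sigma\right\rangle =0$.
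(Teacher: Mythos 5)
Your proof is correct and takes essentially the same route as the paper, which simply says ``following (1) and integrating both sides''; you have filled in the details (the substitution $t\mapsto 2\pi t$, the term-by-term integration, and the orthogonality computation relying on the integrality of $\left\langle \mathbf{x},\sigma\right\rangle$) that the paper leaves implicit. Nothing further is needed.
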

\begin{proof}
Following (1) and integrating both sides. Stronger statements are
possible (e.g. for $\mathbf{x}\in\mathbb{R}^{n}$ or $\mathbf{x}\in\mathbb{C}^{n}$)
using characteristic function inversion theorems.\end{proof}
\begin{thm}
Given $\left\{ n,N\right\} \subset\mathbb{N},j\in\mathbb{Z},\mathbf{x}\in\mathbb{N}^{n}$
then
\begin{equation}
\frac{2^{n}}{N}\sum_{m=1}^{N}e^{2\pi ij\frac{m}{N}}\prod_{k=1}^{n}\cos\left(2\pi x_{k}\frac{m}{N}\right)
\end{equation}

is the number of partitions of $\mathbf{x}$ having size that is divisible
by $N$ with remainder $j$.\end{thm}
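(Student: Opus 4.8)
The plan is to reduce the claim to Theorem 3.1 together with the orthogonality of additive characters modulo $N$. First I would apply Theorem 3.1 at the argument $t=2\pi m/N$, which by equation (1) rewrites each cosine product as an exponential sum,
\[
\prod_{k=1}^{n}\cos\!\left(2\pi x_{k}\tfrac{m}{N}\right)=2^{-n}\sum_{\sigma\in\left\{ -1,1\right\} ^{n}}e^{2\pi i\frac{m}{N}\left\langle \mathbf{x},\sigma\right\rangle }.
\]
Substituting this into (3) cancels the prefactor $2^{n}$ against the $2^{-n}$, and interchanging the two finite sums leaves
\[
\frac{1}{N}\sum_{\sigma\in\left\{ -1,1\right\} ^{n}}\;\sum_{m=1}^{N}e^{2\pi i\frac{m}{N}\left(j+\left\langle \mathbf{x},\sigma\right\rangle \right)},
\]
where the inner sum is now a plain geometric sum in $m$.

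Next I would evaluate that inner sum by the elementary identity $\frac{1}{N}\sum_{m=1}^{N}e^{2\pi im\ell/N}=1$ when $N\mid\ell$ and $=0$ otherwise (the case $\ell\equiv0$ being a sum of $N$ ones, and the nonzero case a full period of a nontrivial geometric progression). With $\ell=j+\left\langle \mathbf{x},\sigma\right\rangle \in\mathbb{Z}$ this collapses the double sum to the number of partitions $\sigma$ for which $\left\langle \mathbf{x},\sigma\right\rangle \equiv-j\pmod N$. To match the statement exactly as phrased I would then invoke the involution $\sigma\mapsto-\sigma$ on $\left\{ -1,1\right\} ^{n}$, which negates $\left\langle \mathbf{x},\sigma\right\rangle $ and hence puts the partitions of size $\equiv-j$ in bijection with those of size $\equiv j\pmod N$; so the count is insensitive to this sign convention. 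As a consistency remark, the displayed expression is manifestly real, since replacing $m$ by $N-m$ conjugates each summand.

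I do not expect a genuine obstacle here: everything is a finite manipulation, and the only points needing care are the bookkeeping of the sign appearing in the exponent — handled by the $\sigma\mapsto-\sigma$ symmetry above — and stating the character-orthogonality identity in the correct normalization $1/N$. As a sanity check one may note that choosing $N>\sum_{k}x_{k}$ together with $j=0$ forces the only admissible size to be $0$, recovering the zero-partition count and thus the content of Corollary 3.2.
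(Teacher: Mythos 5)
Your proof is correct and follows essentially the same route as the paper's: substitute the identity of Theorem 3.1 at $t=2\pi m/N$, interchange the two finite sums, and apply orthogonality of the $N$-th roots of unity. You are in fact slightly more careful than the paper, which introduces the shift $e^{2\pi i t j}$ (its equation (7)) but leaves the resulting $-j$ versus $j$ sign convention unaddressed, whereas you resolve it explicitly via the involution $\sigma\mapsto-\sigma$.
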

\begin{proof}
Following (1) 
\begin{equation}
\frac{2^{n}}{N}\sum_{m=1}^{N}\prod_{k=1}^{n}\cos\left(2\pi x_{k}\frac{m}{N}\right)=\sum_{\sigma\in\left\{ -1,1\right\} ^{n}}\frac{1}{N}\sum_{m=1}^{N}e^{2\pi i\frac{m}{N}\left\langle \mathbf{x},\sigma\right\rangle }
\end{equation}

The sum of the roots of unity on the rhs of (4) is zero if $N$ does
not divide $\left\langle \mathbf{x},\sigma\right\rangle $, and is
one if $N$ does divide it, therefore (4) is equal to
\begin{equation}
\sum_{u=-\infty}^{\infty}c_{uN}
\end{equation}
where $c_{u}$ denotes the number of partitions that sum to $u$:
\begin{equation}
c_{u}=\left|\left\{ \sigma\in\left\{ -1,1\right\} ^{n}|\left\langle \mathbf{x},\sigma\right\rangle =u\right\} \right|
\end{equation}
As for the remainder, observe that
\begin{equation}
\sum_{\sigma\in\left\{ -1,1\right\} ^{n}}e^{2\pi it\left(\left\langle \mathbf{x},\sigma\right\rangle +j\right)}=e^{2\pi itj}2^{n}\prod_{k=1}^{n}\cos\left(2\pi x_{k}t\right)
\end{equation}
\end{proof}
\begin{conjecture}
For all even $n$, for all $\mathbf{x}\in\mathbb{N}^{n}$ the number
of $\mathbf{x}$'s zero partitions is no more than the number of zero
partitions of vector of size $n$ with all its elements equal $1$.
Namely, never more than $\binom{n}{\frac{1}{2}n}$ zero partitions.

Furthermore, for all odd $n$, for all $\mathbf{x}\in\mathbb{N}^{n}$
the number of $\mathbf{x}$'s zero partitions is no more than the
number of zero partitions of vector of size $n$ with all its elements
equal $1$ except one element that equals $2$.
\end{conjecture}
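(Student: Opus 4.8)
The plan is to translate the statement into extremal set theory and invoke Sperner/Erd\H{o}s--Ko--Rado--type results; here $\mathbb{N}=\{1,2,\dots\}$ is understood, since the claim fails outright when zero entries are allowed (e.g.\ the all-zero vector). Write $[n]=\{1,\dots,n\}$ and $X=\sum_{k}x_{k}$, and identify a partition $\sigma\in\{-1,1\}^{n}$ with the set $S=\{k:\sigma_{k}=1\}$ of its positive coordinates; then $\langle\mathbf{x},\sigma\rangle=2\sum_{k\in S}x_{k}-X$, so zero partitions correspond bijectively to sets $S$ with $\sum_{k\in S}x_{k}=X/2$. If $X$ is odd there are none and we are done, so assume $X$ even and set $\mathcal{A}=\bigl\{S\subseteq[n]:\sum_{k\in S}x_{k}=X/2\bigr\}$, whose cardinality is the number of zero partitions. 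Since every $x_{k}\ge1$, no member of $\mathcal{A}$ is a proper subset of another (the difference of the two sums would be a strictly positive integer), so $\mathcal{A}$ is an antichain in the subset lattice; and $\mathcal{A}$ is closed under complementation because $\sum_{k\in S^{c}}x_{k}=X-X/2=X/2$.

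For even $n$ this already finishes the argument: Sperner's theorem gives $|\mathcal{A}|\le\binom{n}{n/2}$, the all-ones vector attains this bound (its zero partitions are exactly the $\sigma$ with $n/2$ plus-ones), and equality in Sperner forces $\mathcal{A}$ to be the entire middle layer $\binom{[n]}{n/2}$, hence $\sum_{k\in S}x_{k}=X/2$ for every $n/2$-subset $S$; swapping one coordinate in and out then forces all $x_{k}$ equal, so the constant vectors are the only maximizers. This is the classical Littlewood--Offord--Erd\H{o}s argument.

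The odd case exploits complement-closedness. Split $\mathcal{A}=\mathcal{A}^{-}\sqcup\mathcal{A}^{+}$ according to $|S|\le\tfrac{n-1}{2}$ or $|S|\ge\tfrac{n+1}{2}$ (no middle size occurs, as $n$ is odd); complementation is a bijection $\mathcal{A}^{-}\leftrightarrow\mathcal{A}^{+}$, so $|\mathcal{A}|=2\,|\mathcal{A}^{-}|$. Moreover $\mathcal{A}^{-}$ is \emph{intersecting}: were $S,T\in\mathcal{A}^{-}$ disjoint, then $S\subseteq T^{c}\in\mathcal{A}$ with $|T^{c}|\ge\tfrac{n+1}{2}>|S|$, contradicting that $\mathcal{A}$ is an antichain. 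Hence everything reduces to the following Erd\H{o}s--Ko--Rado/Milner-type estimate: \emph{if $\mathcal{F}$ is an intersecting antichain on $[n]$ with $|F|\le\lfloor n/2\rfloor$ for every $F\in\mathcal{F}$, then $|\mathcal{F}|\le\binom{n-1}{\lfloor n/2\rfloor-1}$.} Applying it with $\mathcal{F}=\mathcal{A}^{-}$ gives $|\mathcal{A}|\le2\binom{n-1}{(n-3)/2}$, and a direct count shows $(1,\dots,1,2)$ has exactly $2\binom{n-1}{(n-3)/2}$ zero partitions ($\binom{n-1}{(n-3)/2}$ for each sign on the coordinate carrying the $2$); tracing the Erd\H{o}s--Ko--Rado equality case should also pin the extremal vectors down to $(b,\dots,b,2b)$.

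I would prove that estimate by Katona's cyclic-permutation method. Fix a cyclic ordering of $[n]$; the members of $\mathcal{F}$ that form arcs are pairwise intersecting, pairwise non-nested, and of size $<n/2$, and the point is that at most $\lfloor n/2\rfloor$ of them occur. Double counting pairs (cyclic ordering, arc member) then yields $\sum_{F\in\mathcal{F}}|F|!\,(n-|F|)!\le(n-1)!\,\lfloor n/2\rfloor$, that is $\sum_{F\in\mathcal{F}}|F|/\binom{n-1}{|F|-1}\le\lfloor n/2\rfloor$; since the ratio of consecutive values of $j\mapsto j/\binom{n-1}{j-1}$ equals $(j+1)/(n-j)\le1$ for $1\le j\le\lfloor n/2\rfloor$, that function is non-increasing there, so each summand is at least $\lfloor n/2\rfloor/\binom{n-1}{\lfloor n/2\rfloor-1}$ and the claimed bound on $|\mathcal{F}|$ follows; the star in layer $\lfloor n/2\rfloor$ shows it is sharp. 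The real obstacle is the arc-counting claim for arcs of \emph{mixed} sizes: the equal-size case is precisely the circle lemma behind Erd\H{o}s--Ko--Rado, but when several sizes occur---tied together only by non-nestedness---one needs a Helly-type analysis of how pairwise-intersecting, pairwise-unnested sub-half-circle arcs can be packed, and squeezing the count down to exactly $\lfloor n/2\rfloor$ (rather than the $2\lfloor n/2\rfloor-1$ that non-nestedness alone would give) is where the work concentrates.
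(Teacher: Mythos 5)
First, a point of context: the paper offers \emph{no} proof of this statement---it is stated as Conjecture 3.4 and left open---so there is nothing to compare your argument against; you are attempting to settle an open claim of the paper, and your translation into extremal set theory is the right move. Your even case is in fact complete and correct: it is exactly Erd\H{o}s's 1945 resolution of the Littlewood--Offord problem (positive weights make $\mathcal{A}$ an antichain, Sperner gives $\binom{n}{n/2}$, the all-ones vector attains it), and your caveat that $\mathbb{N}$ must exclude $0$ is a genuine and necessary observation, since the all-zero vector violates the stated bound. Your reduction of the odd case is also correct: complement-closure gives $|\mathcal{A}|=2|\mathcal{A}^{-}|$, the antichain property makes $\mathcal{A}^{-}$ intersecting, and the target $2\binom{n-1}{(n-3)/2}$ does equal the count for $(1,\dots,1,2)$.

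The genuine gap is the lemma you reduce to: that an intersecting antichain $\mathcal{F}$ on $[n]$ with all members of size at most $\lfloor n/2\rfloor$ satisfies $|\mathcal{F}|\le\binom{n-1}{\lfloor n/2\rfloor-1}$. You state it, build the rest of the odd case on it, and then explicitly concede that the crucial step of your cycle-method proof---that for each cyclic ordering at most $\lfloor n/2\rfloor$ members of $\mathcal{F}$ appear as arcs, across \emph{mixed} arc lengths---is ``where the work concentrates'' and is not supplied. As written, the odd half of the conjecture is therefore not proved. The good news is that the lemma is a known theorem (the Bollob\'as/Greene--Katona--Kleitman inequality $\sum_{F\in\mathcal{F}}\binom{n-1}{|F|-1}^{-1}\le1$ for intersecting antichains of sets of size at most $n/2$, which implies your bound immediately since $\binom{n-1}{j-1}$ is maximized at $j=\lfloor n/2\rfloor$ over the allowed range); its standard local lemma---take a shortest arc $F_{1}$ and show every other arc must begin or end at one of the $|F_{1}|-1$ interior gaps of $F_{1}$, with at most one arc per gap by the intersecting and non-nested conditions, so at most $\min_{i}|F_{i}|\le\lfloor n/2\rfloor$ arcs occur---is exactly the missing packing argument. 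So the argument is fillable, either by citation or by writing out that gap analysis, but you must do one or the other. Your equality-case remarks (``tracing the Erd\H{o}s--Ko--Rado equality case should pin down the extremal vectors'') are speculative, but they are not needed for the conjecture as stated, which asks only for the upper bound.
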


\section{Modular Arithmetic Formula}
\begin{thm}
Given $n\in\mathbb{N},\mathbf{x}\in\mathbb{N}^{n}$, the number of
$\mathbf{x}$'s zero partition out of all possible $2^{n}$ partitions
is encoded as a binary number in the binary digits of 
\begin{equation}
\prod_{k=1}^{n}\left[1+4^{nx_{k}}\right]
\end{equation}
 from the $s$'th dight to the $s+n$ digit, where $s=n\left\langle \mathbf{x}\mathbf{,}1\right\rangle $.\end{thm}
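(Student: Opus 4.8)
The plan is to read the integer $\prod_{k=1}^{n}\left[1+4^{nx_{k}}\right]$ in base $2^{n}$ and to recognize its digits as the values of the partition-counting function $c_{u}$ from (6). Throughout, write $S:=\sum_{k=1}^{n}x_{k}=\left\langle \mathbf{x},\mathbf{1}\right\rangle$, so that $s=nS$.

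First I would pass from the analytic identity of Theorem 3.1 to its algebraic skeleton: treating $t$ as a formal variable, expanding $\prod_{k}\left(t^{x_{k}}+t^{-x_{k}}\right)$ and collecting monomials by total exponent gives
\begin{equation}
\prod_{k=1}^{n}\left(t^{x_{k}}+t^{-x_{k}}\right)=\sum_{\sigma\in\left\{ -1,1\right\} ^{n}}t^{\left\langle \mathbf{x},\sigma\right\rangle }=\sum_{u=-S}^{S}c_{u}\,t^{u},
\end{equation}
the range of $u$ being forced by $\left|\left\langle \mathbf{x},\sigma\right\rangle \right|\le S$. Multiplying through by $t^{S}$ to clear the negative powers yields the polynomial identity
\begin{equation}
\prod_{k=1}^{n}\left(t^{2x_{k}}+1\right)=\sum_{w=0}^{2S}c_{w-S}\,t^{w}.
\end{equation}

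Next I would specialize $t=2^{n}$; since $\left(2^{n}\right)^{2x_{k}}=2^{2nx_{k}}=4^{nx_{k}}$, identity (10) becomes $\prod_{k=1}^{n}\left[1+4^{nx_{k}}\right]=\sum_{w=0}^{2S}c_{w-S}\left(2^{n}\right)^{w}$. The theorem now reduces to the claim that the right-hand side is already the base-$2^{n}$ expansion of the left-hand side, i.e. that no carrying occurs between consecutive blocks of $n$ binary digits. By uniqueness of radix representations this is automatic once $0\le c_{w-S}<2^{n}$ for every $w$, which is the only step that needs an argument: trivially $c_{u}\le 2^{n}$, and $c_{u}=2^{n}$ would force all $2^{n}$ sign vectors to have the common size $u$; but flipping the sign of any coordinate $k$ with $x_{k}\ne 0$ shifts the size by $2x_{k}\ne 0$, so provided $\mathbf{x}\ne\mathbf{0}$ (automatic for $\mathbf{x}\in\mathbb{N}^{n}$ if $\mathbb{N}$ denotes the positive integers) we get $c_{u}\le 2^{n}-1$, an $n$-bit nonnegative integer.

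Finally I would read off the conclusion. By uniqueness of the expansion, the coefficient of $\left(2^{n}\right)^{w}$ in the base-$2^{n}$ representation of $\prod_{k=1}^{n}\left[1+4^{nx_{k}}\right]$ equals $c_{w-S}$; equivalently, the $n$ binary digits in positions $nw,nw+1,\dots,nw+n-1$ spell out $c_{w-S}$. Taking $w=S$, the block of $n$ binary digits beginning at position $s=nS=n\left\langle \mathbf{x},\mathbf{1}\right\rangle$ encodes $c_{0}$, the number of zero partitions of $\mathbf{x}$, which is the assertion. The substantive point here — and essentially the only place I expect scrutiny — is the carry-free bound $c_{u}<2^{n}$; the rest is bookkeeping with the generating function (9). (A remark on the index range: the block in question is the length-$n$ window $s,s+1,\dots,s+n-1$; bit $s+n$ itself is already the least significant bit of the next block, the one carrying $c_{1}$.)
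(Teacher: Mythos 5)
Your proof is correct, and it arrives at exactly the identity the paper needs --- $\prod_{k}\left[1+4^{nx_{k}}\right]=\sum_{w}c_{w-S}\,2^{nw}$ --- but by a genuinely different and more elementary route. The paper derives this by evaluating the characteristic function $\psi$ at the complex points $\frac{2\pi m}{n}+i\ln 2$, averaging over $m$, and rescaling $x_{k}\mapsto nx_{k}$ so that the roots-of-unity factor collapses to $1$, which yields $\sum_{u}c_{u}2^{-nu}$ and then the product form; the analytic machinery (complex substitution plus aliasing) is doing no real work beyond producing the generating-function identity that you obtain directly by expanding the Laurent polynomial $\prod_{k}\left(t^{x_{k}}+t^{-x_{k}}\right)=\sum_{u}c_{u}t^{u}$, clearing denominators, and setting $t=2^{n}$. (The paper itself gestures at this shortcut in its closing remark, equation (15), via $t=i\ln 2$.) Your approach buys transparency and rigor at the one point where the paper is thin: the paper justifies non-interference of the digit blocks only by noting that $\sum_{u}c_{u}=2^{n}$ with $c_{u}\geq 0$ and that the weights $2^{\pm nu}$ are distinct, whereas you prove the sharp bound $c_{u}\leq 2^{n}-1$ for $\mathbf{x}\neq\mathbf{0}$ by the sign-flip argument, which is precisely what guarantees each coefficient fits in one base-$2^{n}$ digit with no carrying. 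Your parenthetical about the window being $s,\dots,s+n-1$ rather than $s,\dots,s+n$ is also a fair correction of the statement's off-by-one: bit $s+n$ is the least significant bit of the block encoding $c_{1}$.
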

\begin{proof}
We write down the following sum and perform substitution according
to (1):
\begin{equation}
S=\frac{1}{n}\sum_{m=1}^{n}\psi\left(\frac{2\pi m}{n}+i\ln2\right)=\sum_{\sigma\in\left\{ -1,1\right\} ^{n}}\frac{2^{-\left\langle \mathbf{x},\sigma\right\rangle }}{n}\sum_{m=1}^{n}e^{\frac{2\pi im}{n}\left\langle \mathbf{x},\sigma\right\rangle }
\end{equation}

multiplying all $x_{k}$ by $n$ (while preserving partitions, since
we can always multiply all $x_{k}$ by the same factor and keep the
exact number of zero partitions) puts $e^{\frac{2\pi im}{n}\left\langle n\mathbf{x},\sigma\right\rangle }=1$
and we get:
\begin{equation}
S=\sum_{\sigma\in\left\{ -1,1\right\} ^{n}}2^{-n\left\langle \mathbf{x},\sigma\right\rangle }=\sum_{u=-\infty}^{\infty}c_{u}2^{-u}
\end{equation}

where $c_{u}$ is defined in. Recalling that $\sum_{u=-\infty}^{\infty}c_{u}=2^{n}$
and $c_{u}$ are all positive, while on (10) being multiplied by distinct
powers $2^{\pm n}$, therefore the summands' binary digits never interfere
with each other. Recalling that $c_{0}$ is our quantity of interest,
we have shown that the number of zero partitions in $\mathbf{x}$
is encoded at
\begin{equation}
\left\lfloor \frac{2^{n}}{n}\sum_{m=1}^{n}\prod_{k=1}^{n}\cos\left[nx_{k}\left(\frac{2\pi m}{n}+i\ln2\right)\right]\right\rfloor \mod2^{n}
\end{equation}
\begin{equation}
=\left\lfloor \prod_{k=1}^{n}\left[2^{nx_{k}}+2^{-nx_{k}}\right]\right\rfloor \mod2^{n}
\end{equation}
\begin{equation}
=\left\lfloor 2^{-n\sum_{k=1}^{n}x_{k}}\prod_{k=1}^{n}\left[1+2^{2nx_{k}}\right]\right\rfloor \mod2^{n}
\end{equation}
Set
\begin{equation}
M=\prod_{k=1}^{n}\left[1+2^{2nx_{k}}\right]=\sum_{\sigma\in\left\{ 0,1\right\} ^{n}}2^{2n\left\langle \mathbf{x},\sigma\right\rangle }
\end{equation}
then (12) tells us that the number of zero partitions is encoded as
a binary number in the binary digits of $M$, from the $s$'th dight
to the $s+n$ digit.
\end{proof}
Note that the substitution in (9) could take a simpler form. Put $t=i\ln2$
in (1):
\begin{equation}
\prod_{k=1}^{n}\cosh\left(x_{k}\ln2\right)=\prod_{k=1}^{n}\left[2^{x_{k}}+2^{-x_{k}}\right]=\mathbb{E}\left[2^{\left\langle \mathbf{x},\sigma\right\rangle }\right]
\end{equation}

\section{Hardness of Integration}
\begin{cor}
$\mathbf{x}\in\mathbb{Q}^{n}$ has a zero partition if and only if
\begin{equation}
\intop_{0}^{\infty}\prod_{k=1}^{n}\cos\left(x_{k}t\right)dt=\infty
\end{equation}
and does not have a zero partition if and only if
\begin{equation}
\intop_{0}^{\infty}\prod_{k=1}^{n}\cos\left(x_{k}t\right)dt=0
\end{equation}
\end{cor}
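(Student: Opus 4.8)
The plan is to derive both equivalences directly from Theorem~3.1, i.e.\ from the identity $\psi(t):=\prod_{k=1}^{n}\cos(x_{k}t)=2^{-n}\sum_{\sigma\in\{-1,1\}^{n}}\cos\!\big(t\langle\mathbf{x},\sigma\rangle\big)$ of~(1), by integrating it term by term. First I would normalize the instance: since $\cos$ is even and $\sigma\mapsto(\dots,-\sigma_{k},\dots)$ is a bijection of $\{-1,1\}^{n}$, we may assume $x_{k}\geq0$; and since $\mathbf{x}\in\mathbb{Q}^{n}$ we may pick a common denominator $q\in\mathbb{N}$ and pass to $\mathbf{y}:=q\mathbf{x}\in\mathbb{N}^{n}$, because the substitution $t\mapsto t/q$ gives $\int_{0}^{\infty}\prod_{k}\cos(x_{k}t)\,dt=q\int_{0}^{\infty}\prod_{k}\cos(y_{k}t)\,dt$, a fixed positive multiple, while $\mathbf{x}$ and $\mathbf{y}$ have exactly the same zero partitions. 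So it suffices to treat $\mathbf{x}\in\mathbb{N}^{n}$, for which $\psi$ is continuous and $2\pi$-periodic.

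Next I would dispose of the divergent case. Integrating~(1) over $[0,T]$ gives
\[
\int_{0}^{T}\psi(t)\,dt=\frac{c_{0}}{2^{n}}\,T+\frac{1}{2^{n}}\sum_{u\neq0}c_{u}\,\frac{\sin(uT)}{u},
\]
with $c_{u}$ as in~(6) (at $T=2\pi$ this is just Corollary~3.2 rescaled, since $\sin(2\pi u)=0$ for integer $u$). If $\mathbf{x}$ has a zero partition then $c_{0}\geq1$, so the linear term diverges while the remaining finite trigonometric polynomial stays bounded; hence $\int_{0}^{T}\psi(t)\,dt\to+\infty$. This proves the forward direction of~(17) and, contrapositively, the backward direction of~(18).

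The delicate case is $c_{0}=0$: then $\int_{0}^{T}\psi(t)\,dt=2^{-n}\sum_{u\neq0}c_{u}\sin(uT)/u$ is bounded but does \emph{not} converge as $T\to\infty$ (already $n=1$, $x_{1}=1$ gives $\int_{0}^{T}\cos t\,dt=\sin T$), so $\int_{0}^{\infty}$ in~(17)--(18) must be read in a regularized sense. I would make this precise with the Abel kernel: for $\varepsilon>0$, using $\int_{0}^{\infty}e^{-\varepsilon t}\cos(ut)\,dt=\varepsilon/(\varepsilon^{2}+u^{2})$ (valid also at $u=0$, where both sides equal $1/\varepsilon$) together with~(1),
\[
\int_{0}^{\infty}e^{-\varepsilon t}\,\psi(t)\,dt=\frac{1}{2^{n}}\sum_{\sigma\in\{-1,1\}^{n}}\frac{\varepsilon}{\varepsilon^{2}+\langle\mathbf{x},\sigma\rangle^{2}}=\frac{c_{0}}{2^{n}\varepsilon}+\frac{1}{2^{n}}\sum_{u\neq0}\frac{c_{u}\,\varepsilon}{\varepsilon^{2}+u^{2}}.
\]
As $\varepsilon\to0^{+}$ the last sum is finite with every term tending to $0$, so the regularized integral tends to $+\infty$ when $c_{0}\geq1$ and to $0$ when $c_{0}=0$. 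This yields the two remaining implications, and shows that the only possible values of the regularized integral are $0$ (iff $c_{0}=0$) and $+\infty$ (iff $c_{0}>0$), with nothing in between.

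I expect the main obstacle to be this interpretive point rather than any hard estimate: in the no-solution case the improper integral does not converge in the elementary Riemann sense, so one must commit to a summability convention — the Abel regularization above, or equivalently the Cesàro value — after which every step is routine. A secondary remark worth including is that, without any regularization, the statement that survives verbatim when no zero partition exists is only that $\int_{0}^{T}\psi(t)\,dt$ remains \emph{bounded} in $T$; the stronger ``$=0$'' is genuinely a statement about the regularized integral (or, if one prefers, about the Cesàro mean $\lim_{S\to\infty}S^{-1}\int_{0}^{S}\psi=c_{0}/2^{n}$, which vanishes precisely when $\mathbf{x}$ has no zero partition).
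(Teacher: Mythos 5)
Your proof is correct and follows essentially the same route as the paper's own (very terse) proof: reduce to integer inputs by a change of variable, then use the decomposition of $\prod_{k}\cos\left(x_{k}t\right)$ from Theorem 3.1 together with periodicity and the nonnegativity of the integral over a single period, which equals $c_{0}$ times a positive constant. Your Abel-regularization treatment of the $c_{0}=0$ case is an improvement in rigor over the paper, since, as you correctly observe, the improper integral does not converge in the ordinary Riemann sense there and the stated value $0$ only holds under some summability convention that the paper leaves implicit.
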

\begin{proof}
Follows from Theorem 4.1, the integrand being periodic, change of
variable to support rationals, and the integral over a single period
being nonnegative for all inputs.\end{proof}
\begin{cor}
There is no algorithm that takes any function that can be evaluated
in polynomial time, and decides in polynomial time whether its integral
over the real line is zero (conversley, infinity) unless P=NP.\end{cor}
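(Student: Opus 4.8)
The plan is to reduce the decision problem \noun{part} in its strong (binary) setting to the proposed integration-testing problem, and then invoke the NP-completeness of \noun{part}. Given an instance $\mathbf{x}\in\mathbb{N}^{n}$ of \noun{part}, presented in binary, I would form the function
\[
f_{\mathbf{x}}(t)=\prod_{k=1}^{n}\cos\left(x_{k}t\right).
\]
Its symbolic description has size polynomial in the input length, and it can be evaluated at any rational point to any prescribed number of bits of precision in time polynomial in the input length and in the requested precision (it is a product of $n$ cosine evaluations, each after an argument reduction modulo $2\pi$). Hence $f_{\mathbf{x}}$ is a legitimate ``polynomial-time function'' in whatever reasonable bit-model is fixed.

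Next I would apply Corollary 5.1, which furnishes an \emph{exhaustive} dichotomy: $\int_{0}^{\infty}f_{\mathbf{x}}(t)\,dt$ equals $+\infty$ when $\mathbf{x}$ admits a zero partition and equals $0$ when it does not, with no intermediate finite nonzero value possible. Since $f_{\mathbf{x}}$ is even, $\int_{-\infty}^{\infty}f_{\mathbf{x}}=2\int_{0}^{\infty}f_{\mathbf{x}}$, so the same dichotomy holds over the whole real line. Consequently, an algorithm that on input (a description of) $f_{\mathbf{x}}$ decides in polynomial time whether $\int_{\mathbb{R}}f_{\mathbf{x}}=0$ --- equivalently, by the dichotomy, whether it is $+\infty$ --- would decide the \noun{part} instance $\mathbf{x}$ in polynomial time. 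Since \noun{part} in the strong setting is NP-complete (a zero partition of $\mathbf{x}$ is precisely a split of $\{x_{1},\dots,x_{n}\}$ into two blocks of equal sum, i.e. Karp's Partition problem), such a test yields a polynomial-time algorithm for an NP-complete problem, so P=NP. This establishes the contrapositive of the stated corollary, and the ``conversely, infinity'' variant is covered by the very same argument because, under the dichotomy, testing for $0$ and testing for $\infty$ are complementary.

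The main obstacle I anticipate is not any deep mathematics but the bookkeeping needed to fix the computational model precisely: one must say how ``a function that can be evaluated in polynomial time'' is handed to the algorithm --- as a straight-line formula, or as an oracle returning rational approximations --- and then verify that $f_{\mathbf{x}}$ meets that specification \emph{uniformly} in the binary encoding of $\mathbf{x}$. In particular one must check that the transcendental cosine evaluations, with arguments $x_{k}t$ where the $x_{k}$ may be exponentially large as integers, do not secretly incur super-polynomial cost; this is exactly the argument-reduction point noted above. Once the model is pinned down so that $f_{\mathbf{x}}$ qualifies, the reduction is immediate from Corollary 5.1 and requires no further calculation.
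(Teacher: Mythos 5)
Your proposal is correct and is exactly the argument the paper intends: the paper states this corollary without an explicit proof, but it is meant to follow from Corollary 5.1's zero-vs-infinity dichotomy for $\int_0^\infty\prod_k\cos(x_k t)\,dt$, the NP-completeness of \noun{part} in the strong setting, and the polynomial-time evaluability of $\psi$ addressed in Remark 5.5. Your added care about the computational model and the argument reduction for large $x_k$ matches the paper's own discussion of computing the integrand to $n$-bit accuracy.
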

\begin{thm}
{[}Theorem 2.2 on \cite{key-3}{]} If $u$ is an analytic function
satisfying $\left|u\left(z\right)\right|\leq M$ in $\frac{1}{r}\leq\left|z\right|\leq r$
($z\in\mathbb{C}$) for some $r>1$, then for any $N\geq1$ the trapezoid
rule with $N$ points will be far from the exact integral by no more
than $\frac{4\pi M}{r^{N}-1}$.\end{thm}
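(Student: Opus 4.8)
The plan is to read this as the classical statement about the exponential convergence of the periodic trapezoid rule, and to prove it via the Laurent expansion of $u$ together with the roots-of-unity cancellation already exploited in the proof of Theorem 4.3. Since $u$ is analytic on the closed annulus $\frac1r\le|z|\le r$, it has a Laurent series $u(z)=\sum_{k=-\infty}^{\infty}a_kz^{k}$ converging uniformly on every compact subannulus; setting $z=e^{i\theta}$ identifies this with the Fourier series of the analytic $2\pi$-periodic function $\theta\mapsto u(e^{i\theta})$, and the exact integral over one period is $\int_{0}^{2\pi}u(e^{i\theta})\,d\theta=2\pi a_{0}$.

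Next I would evaluate the $N$-point periodic trapezoid rule $T_{N}=\frac{2\pi}{N}\sum_{j=0}^{N-1}u\!\left(e^{2\pi ij/N}\right)$. Substituting the Laurent series and interchanging the (uniformly convergent) sum with the finite node sum, one uses that $\frac1N\sum_{j=0}^{N-1}e^{2\pi ijk/N}$ equals $1$ when $N\mid k$ and $0$ otherwise, exactly as in (4)--(5), to obtain $T_{N}=2\pi\sum_{m\in\mathbb{Z}}a_{mN}$. Hence the quadrature error is precisely the aliasing tail
\[
T_{N}-2\pi a_{0}=2\pi\sum_{m\ne0}a_{mN}=2\pi\sum_{m\ge1}\bigl(a_{mN}+a_{-mN}\bigr).
\]

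It then remains to bound the Laurent coefficients. Cauchy's estimate $|a_{k}|\le M\rho^{-k}$ on the circle $|z|=\rho$, applied with $\rho=r$ for $k\ge0$ and with $\rho=1/r$ for $k\le0$, gives $|a_{\pm mN}|\le Mr^{-mN}$ for every $m\ge1$. Summing the resulting geometric series,
\[
\bigl|T_{N}-2\pi a_{0}\bigr|\;\le\;2\pi\sum_{m\ge1}2M\,r^{-mN}\;=\;4\pi M\,\frac{r^{-N}}{1-r^{-N}}\;=\;\frac{4\pi M}{r^{N}-1},
\]
which is the asserted bound, valid for every $N\ge1$.

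The only delicate points — and the ones I would write out with care — are the bookkeeping of the two normalising factors of $2\pi$ (one in the definition of the integral, one in the trapezoid weights), which is exactly what produces the constant $4\pi$ rather than $2$, and the justification of the two interchanges of summation; the latter follows at once from uniform convergence of the Laurent series on the compact circles $|z|=1,\;|z|=r,\;|z|=1/r$, all lying inside the annulus of analyticity. No genuinely hard analysis is required beyond the Cauchy estimate, so there is no real obstacle; the result is essentially the statement that the trapezoid rule annihilates every Fourier mode except the aliased ones, whose amplitudes decay like $r^{-N}$.
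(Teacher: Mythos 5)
The paper gives no proof of this theorem at all; it is imported verbatim from Trefethen--Weideman \cite{key-3}, so there is nothing internal to compare against. Your argument --- Laurent expansion, aliasing of the modes $a_{mN}$ under the $N$-point node sum, Cauchy estimates $|a_{\pm mN}|\leq Mr^{-mN}$ on the circles $|z|=r$ and $|z|=1/r$, and summation of the geometric series to get $\frac{4\pi M}{r^{N}-1}$ --- is correct, complete, and is essentially the proof given in the cited reference itself.
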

\begin{cor}
If for every \noun{\#part} instance it is possible to efficiently
find a function $w$ such that given $\psi$ as in (1) that corresponds
the problem's instance, and 
\begin{equation}
u\left(z\right)=\psi\left(w\left(z\right)\right)w'\left(z\right)
\end{equation}
is computable in polynomial time (wrt the input length and the desired
output accuracy) and satisfies the conditions of Theorem 4.4 with
$r=2$ and $M={\cal O}\left(\text{poly}\left(\sum_{k=1}^{n}x_{k}\right)\right)$,
then P=NP.\end{cor}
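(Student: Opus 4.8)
The plan is to turn the hypothesised $w$ into a genuine deterministic polynomial-time algorithm for \noun{\#part} and then appeal to NP-completeness. Fix a \noun{\#part} instance $\mathbf{x}\in\mathbb{N}^{n}$ of binary input length $L$ and let $\psi$ be the associated function of (1); by Corollary 3.2 the number $c_{0}$ of zero partitions of $\mathbf{x}$ equals $2^{n}$ times a fixed one-period integral of $\psi$. First I would use the stated properties of $w$: the substitution carries that integral to the exact integral $I$ that the $N$-point trapezoid rule of Theorem 4.4 estimates, so that $c_{0}=2^{n}I$ (any fixed normalizing constant hidden in Theorem 4.4 is $\Theta(1)$ and plays no role below). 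Invoking Theorem 4.4 with $r=2$, the trapezoid estimate $T_{N}$ obeys $|I-T_{N}|\le 4\pi M/(2^{N}-1)$ with $M=\mathcal{O}(\mathrm{poly}(\sum_{k}x_{k}))$.

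Next I would calibrate the number of nodes and the working precision. Since $c_{0}=2^{n}I$ is a non-negative integer, it can be recovered by rounding provided the computed value of $2^{n}I$ is correct to within $\tfrac12$. So I would require the quadrature error to satisfy $2^{n}\cdot 4\pi M/(2^{N}-1)<\tfrac14$, which holds once $N=n+\log_{2}M+\mathcal{O}(1)$. The decisive point is that $\log_{2}M=\mathcal{O}(\log\sum_{k}x_{k})=\mathcal{O}(L)$, because the binary length of $\sum_{k}x_{k}$ is at most the total input length, and $n\le L$; hence $N=\mathcal{O}(L)$ is polynomial. Each node $u(z_{m})=\psi(w(z_{m}))w'(z_{m})$ is, by hypothesis, computable to accuracy $\varepsilon$ in time $\mathrm{poly}(L,\log(1/\varepsilon))$; taking $\varepsilon=2^{-(n+2)}$ uses only $\mathcal{O}(n)$ bits per node, and since $T_{N}$ is an average of the $N$ nodes the accumulated round-off contributes at most $\varepsilon$ more error. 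Thus the computed quantity $\widehat{T}_{N}$ satisfies $|2^{n}\widehat{T}_{N}-c_{0}|<\tfrac12$, and rounding $2^{n}\widehat{T}_{N}$ returns $c_{0}$ exactly.

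Finally I would collect the costs: producing $w$ is polynomial by hypothesis; evaluating $\mathcal{O}(L)$ nodes to $\mathcal{O}(n)$ bits each is polynomial; summing, scaling by $2^{n}$ and rounding are polynomial. So the exact number of zero partitions of an arbitrary \noun{\#part} instance is computed in polynomial time, which in particular decides \noun{part}; since \noun{part} is NP-complete this forces P=NP (equivalently, one may route the Appendix reduction of \noun{\#sat} to \noun{\#part} through this algorithm). The main obstacle is not any single estimate but their interplay: the geometric error decay $2^{-N}$ that analyticity in the fat annulus $\tfrac12\le|z|\le2$ provides must overpower the exponential amplification $2^{n}$ inherited from Corollary 3.2 while $N$ remains $\mathrm{poly}(L)$ and $M$ stays polynomially bounded --- exactly what the twin hypotheses $r=2$ and $M=\mathcal{O}(\mathrm{poly}(\sum_{k}x_{k}))$ secure. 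Weakening either one (say the annulus radius to $1+2^{-\Omega(n)}$, or $M$ to exponential size) would force $N$ to be exponential and collapse the argument, which is consistent with the presumed hardness of \noun{\#part}.
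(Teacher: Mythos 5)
Your proposal is correct and follows essentially the same route as the paper: apply the Theorem 4.4 error bound $4\pi M/(2^{N}-1)$, note that $\log_{2}M=\mathcal{O}(\log\sum_{k}x_{k})$ is polynomial in the input length so that $N=\mathcal{O}(L)$ nodes suffice for accuracy better than $2^{-n}$, evaluate each node in polynomial time by hypothesis, and conclude via the NP-completeness of \noun{part}. Your write-up is in fact more careful than the paper's terse proof (which only claims ``subexponential time''), since you make the node count, working precision, and rounding step explicit.
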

\begin{proof}
Observe that $\psi$ behaves like $e^{x_{k}t}$ for imaginary input.
It therefore satisfies $M=e^{r\sum_{k=1}^{n}x_{k}}$ at the setting
of Theorem 4.4. For exponential convergence wrt \noun{part}'s input
length we need $\frac{4\pi M}{r^{N}-1}$ diminish exponentially. Therefore
if we can change the variable of integration in (17) using some $w$
and result with $M={\cal O}\left(\text{poly}\left(\sum_{k=1}^{n}x_{k}\right)\right)$,
we could estimate the integral in (17) to our desired accuracy ($2^{-n}$)
in subexponential time. \end{proof}
\begin{rem}
The desired accuracy mentioned in Corollary 5.4 is the same accuracy
desired from the integral (typically $n$ binary digits for our integrand,
as (1) suggests). This is due to Kahan summation algorithm (\cite{key-9}).
We can compute the integrand only up to that accuracy when we use
the trapezoid rule, as long as we perform the summation according
to Kahan's algorithm (in constant multiplicative cost). It is evident
that $\psi$ as for itself can be computed in polynomial time to the
desired accuracy for all real $t\in\left(0,1\right)$, even under
the strong setting of \noun{\#part}:

Denote by $M\left(n\right)$ the complexity of multiplying two $n$-digit
numbers up to accuracy of $2n$. Then multiplying three numbers can
be done by multiplying the first two in no more than $M\left(n\right)$
and taking only the first $n$ digits of the result. Afterwards we're
left again with two $n$-digit numbers to multiply, ending with total
complexity of no more than $2M\left(n\right)$. Continuing this way,
the complexity of multiplying $n$ numbers up to precision of $n$
digits takes no more than ${\cal O}\left(\left(n-1\right)M\left(n\right)\right)$.
Note that the multiplicands need not be more accurate than $n$ digits,
since higher digits won't impact lower digits in the result as long
as we multiply numbers in $\left(0,1\right)$, as in cosine. As for
computing every single cosine, observe that $\cos2^{-n}=\sum_{k=0}^{\infty}\frac{1}{\left(2k\right)!}4^{-nk}$
prescribes the digits of the result nicely right away up to a single
division and with linearly growing precision. It can also be achieved
directly from the input's digits, by writing $x=\sum_{k}d_{k}2^{-k}\implies\prod_{k}\left(e^{2^{-k}}\right)^{d_{k}}$
where in binary we have $d_{k}\in\left\{ 0,1\right\} $, suggesting
$\cos2^{\pm n}$ to be precomputed. We also note that the formulas
for $\cos\left(a+b\right),\sin\left(a+b\right)$ can be applied to
calculate the trigonometric functions of $n$-digit binary number
in linear amount of arithmetic operations, by simply following its
$1$ digits and taking \textbf{$b=2^{-k}$} for all $k$ up to $n$.
Therefore computing the cosine in concern is ${\cal O}\left(nM\left(n\right)\right)$
per one input, so we end up with complexity of maximum ${\cal O}\left(n^{2}M^{2}\left(n\right)\right)\approx{\cal O}\left(n^{5}\right)$
per computing the integrand once up to the desired accuracy. Recalling
that \#SAT grows quadratically when reduced to \#PART, we reach ${\cal O}\left(\ell^{10}\right)$
per single integrand evaluation where $\ell$ is the length of the
CNF formula.
\end{rem}

\section{Asymptotic Normality }

Observe that the Conjecture 3.4 says that for all $\mathbf{x}\in\mathbb{N}^{n}$
we have
\begin{equation}
\intop_{0}^{1}\prod_{k=1}^{n}\cos\left(2\pi x_{k}t\right)dt\leq\intop_{0}^{1}\cos^{n}\left(2\pi t\right)dt
\end{equation}
note that $\intop_{0}^{1}\cos^{n}\left(t\right)dt$ approaches to
a gaussian as $n$ tends to infinity:

\begin{equation}
\lim_{n\rightarrow\infty}\intop_{0}^{\sqrt{n}}\cos^{n}\frac{2\pi t}{\sqrt{N}}dt=\lim_{n\rightarrow\infty}\intop_{0}^{\sqrt{n}}\left[1-\frac{4\pi^{2}t^{2}}{2n}+{\cal O}\left(\frac{1}{n}\right)\right]^{n}dt
\end{equation}
\begin{equation}
=\intop_{0}^{\infty}e^{-2\pi^{2}t^{2}}dt=\frac{1}{\sqrt{8\pi}}\approx0.1994
\end{equation}
as the standard Fourier transform derivation of the Central Limit
Theorem suggests. Similarly, if we take a vector $\mathbf{x}$ and
equally add more copies of its elements up to infinity (e.g. transforming
$\left\{ 1,2,3\right\} $ into $\left\{ 1,1,1,2,2,2,3,3,3\right\} $),
we get asymptotic amount of zero partition written as:

\begin{equation}
\lim_{N\rightarrow\infty}\intop_{0}^{\sqrt{N}}\prod_{k=1}^{n}\cos^{N}\left(\frac{2\pi x_{k}t}{\sqrt{N}}\right)dt
\end{equation}
note that now the limit is wrt $N$ since we still have base $n$
numbers, just copied $N$ times. Continuing:
\begin{equation}
=\lim_{N\rightarrow\infty}\intop_{0}^{\sqrt{N}}\prod_{k=1}^{n}\left[1-\frac{4\pi^{2}x_{k}^{2}t^{2}}{2N}+{\cal O}\left(\frac{1}{N}\right)\right]^{N}dt
\end{equation}
\begin{equation}
=\intop_{0}^{\infty}e^{-2\pi^{2}t^{2}\sum_{k=1}^{n}x_{k}^{2}}dt=\frac{1}{\sqrt{8\pi\sum_{k=1}^{n}x_{k}^{2}}}
\end{equation}
It is interesting to see that the resulted gaussian is diagonalized,
i.e. no correlations between the $x_{k}$'s at the asymptote on this
special case of having infinitely many copies. This means that the
fact that numbers are being copied will always govern any other property
of the numbers, except the single quantity $\sqrt{\sum_{k=1}^{n}x_{k}^{2}}$.

\section{Second Order Statistics}
\begin{thm}
Given $n\in\mathbb{N},N\in\mathbb{N},\mathbf{x}\in\mathbb{N}^{n}$,
the variance of the sizes of all partitions is the sum of the squares
of the input. Formally:
\begin{equation}
\sum_{k=1}^{n}x_{k}^{2}=2^{-n}\sum_{\sigma\in\left\{ -1,1\right\} ^{n}}\left\langle \mathbf{x},\sigma\right\rangle ^{2}
\end{equation}
while
\begin{equation}
\frac{2^{n}}{N^{3}}\sum_{m=1}^{N}\left.\frac{\partial^{2}}{\partial t^{2}}\prod_{k=1}^{n}\cos\left(2\pi x_{k}t\right)\right|_{t=\frac{m}{N}}
\end{equation}
is the variance of the sizes of all partitions that their size is
divisible by $N$ without remainder.\end{thm}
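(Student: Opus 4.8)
The plan is to read both identities off the characteristic function $\psi(t)=\prod_{k=1}^{n}\cos(x_{k}t)$ of Theorem 3.1 as (restricted) second-moment computations. For the first identity, the symmetry $\sigma\mapsto-\sigma$ negates $\langle\mathbf{x},\sigma\rangle$ while preserving the uniform measure on $\{-1,1\}^{n}$, so the size has mean zero and its variance equals the raw second moment $2^{-n}\sum_{\sigma}\langle\mathbf{x},\sigma\rangle^{2}$. I would evaluate that moment in the obvious way: expand $\langle\mathbf{x},\sigma\rangle^{2}=\sum_{j,k}x_{j}x_{k}\sigma_{j}\sigma_{k}$ and use $\mathbb{E}[\sigma_{j}\sigma_{k}]=\delta_{jk}$ (the coordinates being independent, mean zero, with $\sigma_{i}^{2}=1$), which collapses the double sum to $\sum_{k}x_{k}^{2}$ and yields (26). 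Equivalently one differentiates $\psi$: since $\psi(t)=\mathbb{E}[e^{it\langle\mathbf{x},\sigma\rangle}]$ we have $\psi''(0)=-\mathbb{E}[\langle\mathbf{x},\sigma\rangle^{2}]$, and the product rule gives $\psi''(0)=-\sum_{k}x_{k}^{2}$ because every term containing a first derivative $\frac{d}{dt}\cos(x_{j}t)|_{0}=0$ drops out.

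For the second identity I would graft the modular filtering of Theorem 3.3 onto a second differentiation. Using Theorem 3.1 with the argument scaled by $2\pi$, write $\psi(t)=\prod_{k}\cos(2\pi x_{k}t)=2^{-n}\sum_{\sigma}\cos(2\pi t\langle\mathbf{x},\sigma\rangle)=2^{-n}\sum_{u}c_{u}\cos(2\pi ut)$ with $c_{u}$ the counting function of (6). Differentiating twice in $t$ pulls a factor $-(2\pi u)^{2}$ out of each term, so $\partial_{t}^{2}\psi(m/N)=-(2\pi)^{2}2^{-n}\sum_{u}c_{u}u^{2}\cos(2\pi um/N)$. Now average over $m=1,\dots,N$ and invoke the root-of-unity orthogonality already used in Theorem 3.3: $\frac{1}{N}\sum_{m=1}^{N}\cos(2\pi um/N)$ equals $1$ if $N\mid u$ and $0$ otherwise. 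Hence the rescaled average in (27) isolates precisely the partitions whose size is divisible by $N$, each weighted by the square of its size; dividing by the number of such partitions — itself produced by Theorem 3.3 at $j=0$ — turns this into the conditional second moment.

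That conditional second moment is the asserted variance because the set $\{\sigma:\ N\mid\langle\mathbf{x},\sigma\rangle\}$ is again invariant under $\sigma\mapsto-\sigma$, which negates the size; so the conditional mean of the size over this set vanishes and conditional second moment equals conditional variance — the same symmetry used in the first part, now applied to the restricted ensemble.

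I do not anticipate a real analytic obstacle: interchanging $\partial_{t}^{2}$ with the finite sums over $\sigma$ and over $m$ is immediate, and the root-of-unity identity is elementary. The only delicate points are bookkeeping — correctly tracking the factor $2\pi$, the powers of $N$, and the sign from $\partial_{t}^{2}\cos$ — together with pinning down the normalization convention so that the displayed expression (27), after division by the count of partitions of size divisible by $N$, matches \emph{variance} as intended. Settling that convention, rather than any hard estimate, is the main thing to get right.
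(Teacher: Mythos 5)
Your proposal is correct and follows essentially the same route as the paper: differentiate the characteristic function $\psi$ twice (equivalently expand $\left\langle \mathbf{x},\sigma\right\rangle ^{2}$ and use $\mathbb{E}\left[\sigma_{j}\sigma_{k}\right]=\delta_{jk}$) for the unconditional second moment, then filter with roots of unity exactly as in Theorem 3.3 for the restricted one. The bookkeeping worry you flag is real but points at a defect of the paper's own statement rather than of your argument: the paper's proof silently drops the factor $-\left(2\pi\right)^{2}$ produced by $\frac{\partial^{2}}{\partial t^{2}}\cos\left(2\pi ut\right)$, the extra $N^{-2}$ in the displayed expression rescales sizes to $\left\langle \mathbf{x},\sigma\right\rangle /N$ rather than the sizes themselves, and the sum is never divided by the number of partitions with $N\mid\left\langle \mathbf{x},\sigma\right\rangle$, so as printed it is a negative, unnormalized quantity and cannot literally be a variance --- your explicit $\sigma\mapsto-\sigma$ symmetry argument for the vanishing of the (conditional) mean is the one ingredient needed to justify the word ``variance'' rather than ``second moment,'' and the paper omits it.
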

\begin{proof}
Following (1) and differentiating:
\begin{equation}
\prod_{k=1}^{n}\cos\left(\pi x_{k}t\right)=2^{-n}\sum_{\sigma\in\left\{ -1,1\right\} ^{n}}\cos\left(\pi t\left\langle \mathbf{x},\sigma\right\rangle \right)
\end{equation}
\begin{equation}
\implies\sum_{\ell=1}^{n}x_{\ell}\sin\left(\pi x_{\ell}t\right)\prod_{k\neq\ell}^{n}\cos\left(\pi x_{k}t\right)=2^{-n}\sum_{\sigma\in\left\{ -1,1\right\} ^{n}}\left\langle \mathbf{x},\sigma\right\rangle \sin\left(\pi t\left\langle \mathbf{x},\sigma\right\rangle \right)
\end{equation}
\begin{equation}
\implies\sum_{\ell=1}^{n}\sum_{\ell\prime=1}^{n}-x_{\ell}\sin\left(\pi x_{\ell}t\right)x_{\ell\prime}\sin\left(\pi x_{\ell\prime}t\right)\prod_{k\neq\ell,\ell\prime}^{n}\cos\left(\pi x_{k}t\right)+x_{\ell}^{2}\prod_{k=1}^{n}\cos\left(\pi x_{k}t\right)
\end{equation}
\begin{equation}
=2^{-n}\sum_{\sigma\in\left\{ -1,1\right\} ^{n}}\left\langle \mathbf{x},\sigma\right\rangle ^{2}\cos\left(\pi t\left\langle \mathbf{x},\sigma\right\rangle \right)
\end{equation}

and (19) follows by substituting $t=0$. (19) can be proved using
Parseval identity as well. Turning to (20):
\begin{equation}
\frac{2^{n}}{N}\sum_{m=1}^{N}\left.\frac{\partial^{2}}{\partial t^{2}}\prod_{k=1}^{n}\cos\left(2\pi x_{k}t\right)\right|_{t=\frac{m}{N}}=\sum_{\sigma\in\left\{ -1,1\right\} ^{n}}\left\langle \mathbf{x},\sigma\right\rangle ^{2}\cos\left(2\pi\frac{m}{N}\left\langle \mathbf{x},\sigma\right\rangle \right)=\sum_{u=-\infty}^{\infty}u^{2}N^{2}c_{Nu}
\end{equation}
due to aliasing of roots of unity, and $c_{Nu}$ the number of partitions
whose size is divisible by $Nu$ as in (4).\end{proof}
\begin{rem}
It is easy to derive all moments and cumulants of our random variable
since we're given its characteristic function.\end{rem}
\begin{thm}
Given $n\in\mathbb{N},N\in\mathbb{N},\mathbf{x}\in\mathbb{N}^{n}$,
then among all $\mathbf{x}$'s partitions that divide by $N>0$ with
remainder $j$, the correlation of the sign of $x_{1},x_{2}$ (wlog)
on those partition is given by:
\begin{equation}
-\intop_{0}^{\pi}t^{2}\sin\left(x_{1}t\right)\sin\left(x_{2}t\right)\prod_{k=3}^{n}\cos\left(x_{k}t\right)dt
\end{equation}
\end{thm}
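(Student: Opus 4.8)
The plan is to mimic the proofs of Theorems 3.3 and 7.1: replace the divisibility constraint by an average of roots of unity (or, in the exact-size version, by $\int_{0}^{1}e^{2\pi it(\langle\mathbf{x},\sigma\rangle-j)}\,dt$), interchange it with the sum over $\sigma$, and use that inserting $\sigma_{1}\sigma_{2}$ still leaves the generating sum factorised over coordinates. The one genuinely new ingredient is the identity
\begin{equation}
\mathbb{E}\!\left[\sigma_{1}\sigma_{2}\,e^{it\left\langle \mathbf{x},\sigma\right\rangle }\right]=2^{-n}\sum_{\sigma\in\left\{ -1,1\right\} ^{n}}\sigma_{1}\sigma_{2}\prod_{k=1}^{n}e^{itx_{k}\sigma_{k}}=-\sin\left(x_{1}t\right)\sin\left(x_{2}t\right)\prod_{k=3}^{n}\cos\left(x_{k}t\right),
\end{equation}
valid because for $k\in\{1,2\}$ the inner sum $\sum_{\sigma_{k}=\pm1}\sigma_{k}e^{itx_{k}\sigma_{k}}$ equals $2i\sin(x_{k}t)$ while for $k\geq3$ it equals $2\cos(x_{k}t)$, and the two factors $i$ account for the overall sign. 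This is nothing but equation (1) of Theorem 3.1 differentiated once in each of the first two frequencies, and it is the same computation that produces the $(1,2)$ cross term in the proof of Theorem 7.1.

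Granting the identity, I would first extract the unnormalised quantity $\sum_{\sigma:\ \langle\mathbf{x},\sigma\rangle\equiv j\,(N)}\sigma_{1}\sigma_{2}$ by writing the indicator as $\frac1N\sum_{m=1}^{N}e^{2\pi i\frac mN(\langle\mathbf{x},\sigma\rangle-j)}$ and applying the identity at $t=2\pi m/N$; this is the discrete counterpart of the integral in the statement. Dividing by the number of such partitions (Theorem 3.3) produces $\mathbb{E}[\sigma_{1}\sigma_{2}\mid\cdot]$, and because the conditional means of $\sigma_{1}$ and $\sigma_{2}$ vanish whenever the constraint is invariant under $\sigma\mapsto-\sigma$ (notably $j=0$, the remaining cases being handled by the same manipulation applied to a single $\sigma_{1}$ factor, which produces a $\sin$ that kills it at $j=0$), this conditional expectation is already the correlation up to the standardising factor $\sqrt{(1-\mathbb{E}[\sigma_{1}\mid\cdot]^{2})(1-\mathbb{E}[\sigma_{2}\mid\cdot]^{2})}$, which is $1$ in the symmetric case.

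It remains to locate the weight $t^{2}$, which is where Section 7's second-order character enters. Expanding the cosines in $-\int_{0}^{\pi}t^{2}\sin(x_{1}t)\sin(x_{2}t)\prod_{k\geq3}\cos(x_{k}t)\,dt$ back via the identity turns it into $2^{-n}\sum_{\sigma}\sigma_{1}\sigma_{2}\int_{0}^{\pi}t^{2}\cos(t\langle\mathbf{x},\sigma\rangle)\,dt$, and the inner integral is the Fourier cosine coefficient of $t^{2}$ on $[-\pi,\pi]$ (namely $\pi^{3}/3$ when $\langle\mathbf{x},\sigma\rangle=0$ and $2\pi(-1)^{\langle\mathbf{x},\sigma\rangle}/\langle\mathbf{x},\sigma\rangle^{2}$ otherwise), so the integral is a concrete $\sigma_{1}\sigma_{2}$-weighted count of partitions graded by their size. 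The same $t^{2}$ appears if one instead differentiates the characteristic-function identity twice before integrating, matching the $\langle\mathbf{x},\sigma\rangle^{2}$ that controls the variance in Theorem 7.1: in $\langle\mathbf{x},\sigma\rangle^{2}=\sum_{k}x_{k}^{2}+\sum_{j\neq k}x_{j}x_{k}\sigma_{j}\sigma_{k}$ the $(1,2)$ term is precisely $2x_{1}x_{2}\sigma_{1}\sigma_{2}$, so the sign correlation is exactly the portion of the conditional second moment not explained by $\sum_{k}x_{k}^{2}$. I expect the \textbf{main obstacle} to be bookkeeping rather than substance: carrying the normalisation (the Theorem 3.3 count, the conditional variances) and reconciling the $2\pi$, $\pi$ and $N$ conventions so that in the regime the statement intends (presumably $j=0$ with $N$ large, i.e.\ the continuous limit of the discrete sum) all $N$- and $j$-dependence collapses onto the displayed integral; the algebraic core, the factorisation above, is immediate from Theorems 3.1 and 7.1.
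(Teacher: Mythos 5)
Your proposal follows essentially the same route as the paper: the paper's entire proof is the single sentence that the result is ``obtained immediately by differentiating (1) ... wrt $x_{1},x_{2}$ and integrating wrt $t$,'' which is exactly your factorisation identity $2^{-n}\sum_{\sigma}\sigma_{1}\sigma_{2}e^{it\left\langle \mathbf{x},\sigma\right\rangle }=-\sin\left(x_{1}t\right)\sin\left(x_{2}t\right)\prod_{k\geq3}\cos\left(x_{k}t\right)$ with the $t^{2}$ supplied by the two differentiations. Your additional worries about normalisation and the fact that the displayed integral contains no $N$ or $j$ are legitimate gaps in the theorem statement itself, but they are not addressed by the paper's proof either, so your account is, if anything, more complete than the original.
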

\begin{proof}
Obtained immediately by differentiating (1) and differentiating wrt
$x_{1},x_{2}$ and integrating wrt $t$.
\end{proof}

\section{Estimating \#SAT}

The numbers produced by the reduction from \noun{\#sat} to \noun{\#part}
have digits that does not exceed 4, and if using radix 6, they never
even carry. Therefore the very same digits produced by the reduction
can be interpreted in any radix larger than 5, being reduced to a
different \noun{\#part }problem. Still, it is guaranteed that the
number of solution to those \noun{\#part} problems are independent
of the radix, as they're all reduced from the same \noun{\#sat} problem.
This property might be used to approximate \noun{\#sat} using results
as Theorem 3.3. We can obtain the number of partitions that their
size divides a given number $N$ in polynomial time wrt $n$ (the
number of numbers to partition) and the number of digits of $x_{k}$.
Nevertheless, it takes exponential time in the number of digits of
$N$. 

The probability that there exists a partition with nonzero size that
is divisible by a given prime $p$ is roughly
\begin{equation}
{\cal P}\left[p|\left\langle \mathbf{x},\sigma\right\rangle \right]\approx1-\left(1-\frac{1}{p}\right)^{2^{n}}
\end{equation}

taking $K$ reductions of a single \noun{\#sat} problem instance and
a set $P$ of primes, the probability that on reductions there exists
a partition with size divisible by a given prime $p$ that is not
a zero partition is therefore roughly
\begin{equation}
\prod_{p\in P}\prod_{k=1}^{K}{\cal P}\left[p|\left\langle \mathbf{x}_{k},\sigma\right\rangle \right]\approx\prod_{p\in P}\left[1-\left(1-\frac{1}{p}\right)^{2^{n}}\right]^{K}
\end{equation}
\[
\leq\exp\left(-K\sum_{p\in P}\left(1-\frac{1}{p}\right)^{2^{n}}\right)
\]
recalling that for $x\in\left[0,1\right]$ we have $e^{-x}\geq1-x$.
This doesn't seem to be helpful since it seem to require exponentially
many or exponentially large primes or reductions. However, if Conjecture
3.4 is true, then we can bound our hueristic approximation with rather
\begin{equation}
\exp\left(-K\sum_{p\in P}\left(1-\frac{1}{p}\right)^{\binom{n}{\frac{1}{2}n}}\right)
\end{equation}

\section{Discussion}

On Theorem 3.3 we have seen that we can efficiently query for the
number of partitions that divide by $N$ with remainder $j$. It is
interesting to see that positively solving \noun{part} (resp. \noun{sat})
by guesses is straight-forward: we just try partitions (resp. substitutions)
and if we're lucky to find a zero partition (resp. SAT) then we solved
the problem. On the other hand, how can we do one trial and possibly
decide that the set is unpartitionable (resp. UNSAT)? Our analysis
suggest such a method. If we query for the number partitions that
are divisible by $N$ with $j=0$ and get zero, then we know that
the set is unpartitionable. Similarly, if we do the same for $j\neq0$
and happen to get $2^{n}$, we know that $\mathbf{x}$ does not have
a zero partition. Those trials are argubaly independent due to the
pseudo-randomness of the mod operation.

On Section 4 \noun{\#part} (equivalently, \noun{\#sat}) is reduced
into a problem of computing the $k$'th digit of the result of the
multiplications of numbers of the form $100\dots001$, i.e. two ones
only and zeros between them. In fact, this result is independent on
the radix chosen (given it is not too small). On this setting, the
number of zeros has polynomial amount of digits wrt the problem's
input size, while the number of multiplicands is also polynomial.

On Section 5 we showed that \noun{p!=np} implies a result of nonexistence
of certain complex analytic functions. \noun{p!=np} also implies impossibility
to decide in polynomial time whether a definite integral (with bounded,
periodic, and polynomially evaluated integrand) equals either zero
or infinity, and proved that a single evaluation of $\psi$ can be
done in polynomial time.

On Section 6 we computed some asymptotic bounds that might be useful
in further analysis.

On Section 7 we practically showed how it is possible to express correlations
among different variables in a CNF formula, though we did not give
a full development of this idea.

On Section 8 we have seen that if Conjecture 3.4 is true, then we
can give heuristic having approximate exponentially convergent probabilistic
estimation to \noun{\#sat}, by taking advantage of the modular formulas
we derived in Theorem 3.3. Interestingly, this method reveals relatively
very little information about any single \noun{\#part} problem, since
we use reduce the \noun{\#sat} problem instance into many \noun{\#part}
instances, taking advantage on the reduction promising us \noun{\#part}
problems with quite different modular properties, yet with exactly
the same number of zero partitions.

For additional further research, by Theorem 3.3 we can get successive
estimates to the integral by selecting e.g. primes $N=2,3,5,\dots$.
We could then accelerate this sequence using Shanks, Romberg, Pade
or similar sequence-acceleration method.

\subsection*{Acknowledgments}

Thanks to Avishy Carmi and HunterMinerCrafter for many valuable discussions.

\appendix

\section{Appendix}

\subsection{Reductions}

\paragraph{Reduction of \noun{\#sat} to \noun{\#subset-sum}}

Given variables $x_{1},\dots,x_{l}$ and clauses $c_{1},\dots,c_{k}$
and let natural $b\geq6$. we construct a set $S$ and a target $t$
such that the resulted subset-sum problem requires finding a subset
of $S$ that sums to $t$. The number $t$ is $l$ ones followed by
$k$ 3s (i.e. of the form $1111\dots3333$). $S$ contains four groups
of numbers $y_{1},\dots,y_{l}$, $z_{1},\dots,z_{l}$, $g_{1},\dots,g_{k}$,
$h_{1},\dots,h_{k}$ where $g_{i}=h_{i}=b^{k-i}$, and $y_{i},z_{i}$
are $b^{k+l-i}$ plus $b^{m}$ for $y_{i}$ if variable $i$ appears
positively in clause $m$, or for $z_{i}$ if variable $i$ appears
negated in clause $m$. Then, every subset that sum to $t$ matches
to a satisfying assignment in the input CNF formula and vice versa,
as proved in \cite{key-1}.

\paragraph{Reduction of \noun{\#subset-sum} to \noun{\#part}}

Given $S,t$ as before and denote by $s=\sum_{x\in S}x$ the sum of
$S$ members, the matching \noun{part} problem is $S\cup\left\{ 2s-t,s+t\right\} $.
Here too all solutions to both problems are preserved by the reduction
and can be translated in both directions.

\subsection{Miscellaneous }
\begin{thm}
Let $Z^{\mathbf{x}}$ be the number of zero partitions of a vector
of naturals $X$. Let $D_{x}^{\mathbf{x}}$ be the number of zero
partitions of $X$ after multiplying one if its elements by two, where
this element is denoted by $x$. Let $A_{x}^{\mathbf{x}}$ be the
number of zero partitions of $X$ after appending it $x$ (so now
$x$ appears at least twice). Then 
\begin{equation}
Z^{\mathbf{x}}=D_{x}^{\mathbf{x}}+A_{x}^{\mathbf{x}}
\end{equation}
\end{thm}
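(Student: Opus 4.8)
The plan is to recast all three quantities as constant-term (Fourier-coefficient) extractions from a single Laurent polynomial, using the characteristic-function picture of Theorem 3.1. For any integer vector $\mathbf{y}$ write $G_{\mathbf{y}}(q)=\prod_{k}\left(q^{y_{k}}+q^{-y_{k}}\right)=\sum_{u}c_{u}(\mathbf{y})\,q^{u}$, so that the number of zero partitions of $\mathbf{y}$ is the constant term $c_{0}(\mathbf{y})=[q^{0}]G_{\mathbf{y}}(q)$; this is the $q=e^{it}$ form of Corollary 3.2, since $q^{y_k}+q^{-y_k}=2\cos(y_k t)$ and extracting the constant coefficient is the period integral. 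I would single out the distinguished value $x$ and factor $G_{X}(q)=P(q)\,(q^{x}+q^{-x})$, where $P(q)=\sum_{u}\tilde c_{u}\,q^{u}$ collects the remaining $n-1$ entries, and then rewrite each of the three operations as a modification of the single factor $q^{x}+q^{-x}$.

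Next I would read off the three counts. Doubling $x$ replaces that factor by $q^{2x}+q^{-2x}$, so $D_{x}^{\mathbf{x}}=[q^{0}]P(q)(q^{2x}+q^{-2x})=\tilde c_{2x}+\tilde c_{-2x}$. Appending a second copy of $x$ squares the factor, $(q^{x}+q^{-x})^{2}=q^{2x}+2+q^{-2x}$, giving $A_{x}^{\mathbf{x}}=\tilde c_{2x}+2\tilde c_{0}+\tilde c_{-2x}=D_{x}^{\mathbf{x}}+2\tilde c_{0}$. Finally the original count is $Z^{\mathbf{x}}=[q^{0}]P(q)(q^{x}+q^{-x})=\tilde c_{x}+\tilde c_{-x}$. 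The same three identities admit a purely combinatorial reading: in the appended vector the two copies of $x$ either carry equal signs, contributing exactly the doubled-instance zero partitions (hence $D_{x}^{\mathbf{x}}$), or opposite signs, which cancel and force the remaining entries to form a zero partition on their own (hence $2\tilde c_{0}$).

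The step I expect to be the real obstacle is the last one: reconciling these formulas with the asserted identity $Z^{\mathbf{x}}=D_{x}^{\mathbf{x}}+A_{x}^{\mathbf{x}}$. The computation above does not produce it; it produces $A_{x}^{\mathbf{x}}=D_{x}^{\mathbf{x}}+2\tilde c_{0}$, where $\tilde c_{0}$ is the number of zero partitions of $X$ with one copy of $x$ deleted. This is genuinely different, because $Z^{\mathbf{x}}=\tilde c_{x}+\tilde c_{-x}$ involves the coefficient at $\pm x$ rather than the constant term $\tilde c_{0}$. Indeed the stated identity fails on small inputs: for $X=(1,1)$ with $x=1$ one has $Z^{\mathbf{x}}=2$ while $D_{x}^{\mathbf{x}}=A_{x}^{\mathbf{x}}=0$, and for $X=(2,1,1)$ with $x=2$ one has $Z^{\mathbf{x}}=2$, $D_{x}^{\mathbf{x}}=0$, $A_{x}^{\mathbf{x}}=4$, so $A_{x}^{\mathbf{x}}$ can even exceed $Z^{\mathbf{x}}$.

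Accordingly, I would expect the honest outcome of this line of attack to be a \emph{correction} of the statement to $A_{x}^{\mathbf{x}}=D_{x}^{\mathbf{x}}+2Z^{\tilde{\mathbf{x}}}$, with $\tilde{\mathbf{x}}=X$ with one copy of $x$ removed, rather than a proof of the identity exactly as worded. Proving the latter would require an extra hypothesis forcing $\tilde c_{x}+\tilde c_{-x}=\tilde c_{2x}+\tilde c_{-2x}+2\tilde c_{0}$, which does not hold for general natural vectors; so the main work is not a missing lemma but recognizing that the claimed relation must be repaired before any proof can close.
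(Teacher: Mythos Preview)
Your analysis is correct, and the theorem as printed is indeed false; the counterexamples $X=(1,1)$ and $X=(2,1,1)$ are decisive. The paper's own proof uses essentially the same device you do --- the double-angle identity $2\cos^{2}(xt)=\cos(2xt)+1$ applied inside the integral representation of Corollary~3.2, which is exactly the $q=e^{it}$ version of your Laurent-polynomial factorisation $G_{X}(q)=P(q)(q^{x}+q^{-x})$ --- but it commits an algebraic slip at the subtraction step. The paper asserts
\[
\psi(x_{1},\dots,x_{m},\dots,x_{n})-\psi(x_{1},\dots,2x_{m},\dots,x_{n})
=2^{n+1}\int_{0}^{\pi}\cos^{2}(x_{m}t)\prod_{k\neq m}\cos(x_{k}t)\,dt,
\]
which would require $\cos(x_{m}t)-\bigl(2\cos^{2}(x_{m}t)-1\bigr)=2\cos^{2}(x_{m}t)$, and that is simply false. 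If one carries the computation through correctly, the $2\cos^{2}(x_{m}t)$ term splits as $\cos(2x_{m}t)+1$ and one obtains precisely your repaired identity
\[
A_{x}^{\mathbf{x}}=D_{x}^{\mathbf{x}}+2Z^{\tilde{\mathbf{x}}},
\]
with $\tilde{\mathbf{x}}$ equal to $X$ with one copy of $x$ removed. So there is no gap in your reasoning; the gap is in the paper, and you have both located it and supplied the correct statement.
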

\begin{proof}
Denote
\begin{equation}
\psi\left(x_{1},\dots,x_{n}\right)=2^{n}\intop_{0}^{\pi}\prod_{k=1}^{n}\cos\left(x_{k}t\right)dt
\end{equation}

then, using the identity $\cos2x=2\cos^{2}x-1$:
\begin{equation}
\psi\left(x_{1},\dots,2x_{m},\dots,x_{n}\right)=2^{n}\intop_{0}^{\pi}\cos\left(2x_{m}t\right)\prod_{k\neq m}^{n}\cos\left(x_{k}t\right)dt
\end{equation}
\[
=2^{n}\intop_{0}^{\pi}\left[2\cos^{2}\left(x_{m}t\right)-1\right]\prod_{k\neq m}^{n}\cos\left(x_{k}t\right)dt
\]
\begin{equation}
\implies\psi\left(x_{1},\dots,x_{m},\dots,x_{n}\right)-\psi\left(x_{1},\dots,2x_{m},\dots,x_{n}\right)=
\end{equation}
\[
2^{n+1}\intop_{0}^{\pi}\cos^{2}\left(x_{m}t\right)\prod_{k\neq m}^{n}\cos\left(x_{k}t\right)dt=\psi\left(x_{1},\dots,x_{m},\dots,x_{n},x_{m}\right)
\]
\end{proof}

\end{document}